\documentclass{article}
\usepackage{graphicx} 
\usepackage{amsfonts,amssymb,graphicx,amsmath,amsthm, float}
\usepackage{comment,xcolor}
\usepackage{epstopdf}
\usepackage[top=1in, bottom=1in, left=1.5in, right=1.5in]{geometry}

\usepackage[colorinlistoftodos]{todonotes}
\theoremstyle{plain}
\newtheorem{theorem}{Theorem}[section]

\theoremstyle{definition}
\newtheorem{example}[theorem]{Example}

\title{An Evaluation of Borda Count Variations Using Ranked Choice Voting Data}
\author{N. Bradley Fox and Benjamin Bruyns}
\date{\today}

\begin{document}

\maketitle






\begin{abstract}
The standard voting methods in the United States, plurality and ranked choice (or instant runoff) voting, are susceptible to significant voting failures.  These flaws include Condorcet and majority failures as well as monotonicity and no-show paradoxes.  We investigate alternative ranked choice voting systems using variations of the points-based Borda count which avoid monotonicity paradoxes.  These variations are based on the way partial ballots are counted and on extending the values of the points assigned to each rank in the ballot.  In particular, we demonstrate which voting failures are possible for each variation and then empirically study 421 U.S. ranked choice elections conducted from 2004 to 2023 to determine the frequency of voting failures when using five Borda variations.  Our analysis demonstrates that the primary vulnerability of majority failures is rare or nonexistent depending on the variation.  Other voting failures such as truncation or compromise failures occur more frequently compared to instant runoff voting as a trade-off for avoiding monotonicity paradoxes.

\end{abstract}


\section{Introduction}

The vast majority of elections in the United States have been conducted using the single-vote plurality method, other than the use of the Electoral College for Presidential elections.  Increasingly over the past twenty years, ranked choice voting (RCV), particularly through the method of instant runoff voting (IRV), has been implemented across numerous municipal and statewide elections including Presidential primaries and U.S. Representative elections in states such as Alaska and Maine.  Both of these methods are vulnerable to multiple voting failures. Plurality is especially problematic, since ranked ballots are necessary to determine if some failures, such as the majority loser or Condorcet failures, have occurred. The following are voting failures, which will be defined in the next section, that can occur with these two methods:
\begin{itemize}
    \item Both methods can exhibit a Condorcet winner failure.
    \item Both methods are susceptible to the spoiler effect.
    \item Both methods are vulnerable to compromise failures.
    \item Plurality can result in the majority loser or the Condorcet loser candidate winning the election.
    \item IRV can exhibit upward and downward monotonicity paradoxes.
    \item IRV can have a truncation failure through occurrences of the no-show paradox.
\end{itemize}

Recent work has focused on comparing results of various voting methods using U.S. elections conducted with IRV. McCune and McCune (2024) showed that each pair of methods agreed on a winning candidate in the vast majority of cases with the lowest pair being the plurality method compared to a variation of the Borda count which produced the same winner about $76\%$ of the time. Graham-Squire and McCune~(2022) examined a similar set of RCV elections that used the IRV method to determine the rates at which voting failures occur, finding that while uncommon, IRV elections sometimes exhibit the various failures listed above. Arrow~(1963) famously proved in his Impossibility Theorem that no voting method exists that will always avoid these failures, but we can still strive to limit their occurrences and particularly eliminate the most undesirable paradoxes. 

To this end, we will examine variations of the points-based Borda count method.  This voting procedure was first introduced by Nicholas of Cusa in the 15th century, although later was attributed to 18th century mathematician Jean-Charles de Borda~(Emerson 2013).  It is an RCV method in which each voter ranks the candidates from 1 to $n$, with their top choice receiving $n$ points, second earning $n-1$, and so on until their last choice gets $1$ point.  We'll represent the points awarded to the $n$ candidates by $(n,n-1,n-2,\ldots, 2,1)$, which will be referred to as the points vector.  The candidate receiving the highest point total after accumulating points from every voter is declared the winner.

The advantages of the Borda count over IRV have been previously described by authors such as Dummett~(1997).  In his \textit{Principles of Election Reform}, he argues against the single transferable vote, which is a multiwinner version of IRV, stating ``STV is quasi-chaotic because it takes into account only the first choices of some voters, and the second, third, or even fourth choices of others, giving them as much weight as the first choices."  The Borda count, on the other hand, is posited by Saari~(2001a) to be ``the more accurate reflection of voter preferences" since it considers the full set of rankings and weighs them according to the voter's ordering.  He further argues that the Borda count ``is the unique method where the outcome avoids most of the unpleasant paradoxes"~(Saari 2001b) described by Arrow.

The use of RCV has increased in the U.S., but this has primarily been through IRV elections.  Variations of the Borda count have been successfully implemented internationally, in particular in the Pacific Island nations of Kiribati and Nauru~(Reilly 2022). Kiribati uses the standard Borda count in which members of their legislative body rank four members to be nominees for their presidency.  The top four points recipients then move on to a national plurality election.  Nauru uses a fractional version called the Dowdall system to elect representatives from multi-member districts. In their Borda variation, the highest ranked candidate receives 1 point, second receives half a point, third earns a third of a point, and so on for a points vector of $(1,1/2, 1/3,1/4,\ldots, 1/n)$. In addition to these island nations, Slovenia uses the Borda count to elect two parliamentarian members who represent Italian and Hungarian minority districts~(Fraenkel and Grofman 2014).

While Borda methods have not been used in U.S. political elections, Americans have seen Borda count variations used in areas such as sports rankings and award voting (Morris and Mantrell 2024).  For example, the NCAA uses truncated 25-team ballots to determine their weekly Top 25 collegiate sport rankings.  Similarly, MLB uses a version of Borda for their MVP and other award voting with a points scale that favors first-place votes.  Particularly, a ten-player ballot awards 14 points for their top choice before following the standard Borda points scale from 2nd to last, resulting in assigning points as $(14,9,8,\ldots, 2,1)$. The variations of how countries or organizations set up the points vector, or whether ballots can be truncated or partial, adds to the flexibility of how states or cities could implement their own versions of the Borda count.

We will discuss numerous voting failures in this article, analyzing which are potentially exhibited by variations of the Borda count.  Specific U.S. RCV elections will be examined in detail, showing some vulnerabilities of these points-based methods and comparing to similar failures that can occur with IRV.  Finally, we empirically study the voting failure rates if Borda count methods were used in the over 400 U.S. RCV elections from the past twenty years.

\section{Voting Failures}

We begin by defining the primary means of evaluating the fairness of election methods, which is through the observation of voting failures, sometimes referred to in the positive sense as fairness criteria.  It is known from Arrow's Impossibility Theorem that every voting method can exhibit at least one of a small subset of these failures. We will demonstrate which failures are possible or will never occur when using the Borda count variations, which will be aided by grouping the failures into three distinct sets.

\begin{itemize}
    \item Verifiable Failures: These voting failures occur when a candidate meets a criterion that deems them the fair winner of the election, yet the method does not result in their election.  Or similarly, a criterion may declare a candidate to be an unfair choice as the winner, yet the method does in fact elect that candidate.  In each case, it is easy to verify from a particular election's preference profile, which is a table displaying the number of votes for each unique ballot ranking, if such a criterion is met and whether that candidate won or lost. 
        \begin{itemize}
            \item \textbf{Majority Winner/Loser Failure:} A \textit{majority winner failure} occurs when a candidate has a majority of first-place votes but loses the election. A \textit{majority loser failure}, on the other hand, occurs when a candidate has a majority of last-place votes yet wins the election. Note that partial ballots with more than one candidate unranked will not be considered as having a last-place vote, but are still considered in the total when determining if a majority exists.
            \item \textbf{Condorcet Winner/Loser Failure:} A \textit{Condorcet winner failure} occurs if a candidate would defeat every other candidate in a pairwise comparison, but loses the election. A \textit{Condorcet loser failure} occurs if a candidate loses to every other candidate in a head-to-head comparison, yet wins the election. With partial ballots, note that a candidate ranked on the partial ballot is clearly favored over all unranked candidates, but partial ballots are not considered in a pairwise comparison if both candidates are unranked on that ballot. 
        \end{itemize}
        \item Antidemocratic Voting Paradoxes: These are voting failures where changes in votes paradoxically have an inverse effect than expected on the winner of the election.  Such a paradox is antithetical to the democratic process since gaining votes or moving up the ranking of ballots should only possibly help the candidate and never hurt their chances of winning, and vice versa for losing support.
        \begin{itemize}
    \item \textbf{Upward/Downward Monotonicity Paradox}:  An \textit{upward monotonicity paradox} occurs if there is a set of ballots in which moving the winning candidate higher in those rankings, while maintaining the relative order of the other candidates, would result in that candidate no longer winning the election.  A \textit{downward monotonicity paradox}, on the other hand, is exhibited if there is a set of ballots in which moving a losing candidate down in those rankings, while maintaining the relative order of the other candidates, would result in that candidate now winning the election.  These are also referred to as the more-is-less (upward) and less-is-more (downward) paradoxes.
    \item \textbf{No-Show Paradox}: The \textit{no-show paradox} is an extreme form of the soon-to-be defined truncation failure that occurs if there is a set of voters whose top preference is a losing candidate, but removing those ballots would result in that candidate now winning the election.  Hence those voters have a more desirable outcome occur by staying home and not voting.
\end{itemize}
    \item Voter/Candidate Manipulation Failures: This set of voting failures involves changes to voter ballots or to the slate of candidates that could alter the election results.  While these could make a voting method vulnerable to types of tactical or strategic voting, it may be difficult in practice for voters to collectively manipulate the election.  And unlike the previous paradoxes, the changes made to votes do have the expected effect, such as moving a candidate down or truncating them off a ballot resulting in that candidate losing.  
        \begin{itemize}
            \item \textbf{Truncation Failure:} A \textit{truncation failure} occurs if there is a set of voters who could rank fewer candidates on their ballots and result in their favored candidate becoming the winner. Note this is usually called a truncation paradox in the literature, but we choose to distinguish it from the antidemocratic paradoxes since truncating to help your favored candidate is having the desired effect for that voter rather than the inverse effect.
            \item \textbf{Compromise Failure:} Assume there is a set of voters that rank a candidate $A$ over candidate $B$, both of whom are ranked higher than the winning candidate $C$. A \textit{compromise failure} occurs if those voters moving candidate $B$ to the top of their ballot would result in the winner changing to $B$.
            \item \textbf{Spoiler Effect:} A \textit{spoiler effect} occurs if the removal of a subset of losing candidates from the slate would change the winner of election.
        \end{itemize}
    
\end{itemize}


\begin{table}
{\scriptsize
\begin{center}
\begin{tabular}{ | c | c | c | c | c | c | c | c | c | c | }
\hline
 $\#$Votes & 11181 & 27165 & 15488 & 21177 & 34155 & 3678 & 23650 & 47407 & 4699 \\ \hline
 1st & Begich & Begich & Begich & Palin & Palin & Palin & Peltola & Peltola & Peltola \\
 2nd & -- & Palin & Peltola & -- & Begich & Peltola & -- & Begich & Palin \\
 3rd & -- & Peltola & Palin & -- & Peltola & Begich & -- & Palin & Begich \\ \hline
\end{tabular}
\caption{The Alaska Special Election for the US House in August 2022.  Write-ins are removed and two-candidate ballots have the unlisted candidate included in third rank.}
\label{Alaska}
\end{center}}
\end{table}

Analysis by Graham-Squire and McCune (2022) showed that IRV has exhibited the three antidemocratic paradoxes in multiples elections over the past twenty years. The most notable example, further analyzed by Graham-Squire and McCune (2023), is the \textbf{2022 Alaska U.S. House Special Election}, which featured upward monotonicity and no-show paradoxes.  The preference profile for this election is displayed in Table~\ref{Alaska}.  
The election was conducted using IRV, resulting in Peltola as the winner, and we note that Peltola is also the plurality winner, although without a majority of first-place votes. 

In total, this IRV election exhibits five voting failures, including the upward monotonicity paradox that can result in Peltola gaining votes but losing to Begich and the no-show paradox where some voters who prefer Begich over Peltola could cause Begich to win by not voting.  There were also a Condorcet winner failure since Begich is the Condorcet winner and a spoiler effect since removing Palin results in Begich winning. 
 Finally, there is a compromise failure in which Palin supporters could compromise and move their second choice Begich up to first, resulting in Begich winning.

Focusing more on the monotonicity paradox, it would occur if 6000 voters who included only Palin on their ballot changed their vote to rank Peltola above Palin.  This would result in Palin being eliminated first and Begich now winning the final round over Peltola.  Graham-Squire and McCune~(2022) dismiss concerns about monotonicity paradoxes by arguing that ``[i]t is hard to believe that voters would vote insincerely to attempt to engineer a monotonicity or no-show paradox."  

Our contention is that antidemocratic voting paradoxes could present themselves in IRV elections not through intentional strategic voting, but through the standard ways that votes change during campaigns. New policy positions, endorsements, debate performances, negative ads, and scandals would be more likely to trigger a voting paradox than strategic voting. Those 6000 voters could have simply been persuaded in the final moments of the campaign to vote for Peltola and accidentally led to her downfall. 

Moreover, since elections are simply snapshots in time of a set of voter preferences on election day, it is impossible to know if last-minute changes to preferences or to turnout led to a monotonicity or no-show paradox.  The upward monotonicity paradox is merely a hypothetical scenario that clearly did not happen in this election, but the reverse process involving a downward monotonicity paradox may have actually occurred without us knowing.  In the week leading up to election day, there could have been 6000 Peltola$>$Palin voters (meaning ones who ranked Peltola above Palin) who were convinced to drop Peltola from their ballots and only vote for Palin, resulting in the final ballot totals in Table~\ref{Alaska}.  Now we have a scenario where Begich would have been the winner, but Peltola losing support caused her to become the new winner, creating a downward monotonicity paradox.  

While Graham-Squire and McCune~(2022) showed that it is rare for a monotonicity paradox to be possible, the fact that we cannot pinpoint whether they actually occurred or not is a clear weakness of IRV. This fact motivates our focus on \textit{monotonic voting methods} which can never display one of these three paradoxes. Our focus will be on variations of the Borda count, but other such voting systems used in U.S. or international elections are approval voting and Bucklin voting, along with the plurality method.  Additionally, there are multiple methods based on pairwise comparisons such as Copeland's method, the Ranked Pairs system, and the Schulze method, although we are not aware of their use in any political elections.

\section{Borda Count Variations}

We will examine the outcomes and fairness of several variations of the Borda count method in our upcoming sections, which we will refer to as BC variations. We must first distinguish between two implementations of the standard Borda count points system described by Emerson~(2013) and used by McCune and McCune~(2024) to account for partial or truncated ballots. The original points system introduced by Borda with the points vector $(n,n-1,\ldots, 2, 1)$ is assuming every ballot is complete; that is, each voter ranked a full list of candidates.  In most ranked choice elections with $n$ candidates, however, voters may choose to submit a \textit{partial} ballot in which they rank fewer than $n$ options.  Additionally, some districts limit voters to ranking only a subset of candidates of a maximum size $m<n$.  These are called \textit{truncated} ballots.  For example, the Minneapolis City Council elections only allow at most 3 candidates to be included in a ballot no matter how many candidates are running.

The two primary approaches for handling partial or truncated ballots are the \textit{Borda count untreated} (BCU) and the \textit{modified Borda count} (MBC).  If a ballot only ranks $k$ of the $n$ candidates, MBC awards $(k,k-1,\ldots, 2,1,0,\ldots,0)$, so all unranked candidates receive 0 points.  BCU awards points to the ranked candidates based on the full Borda count value as $(n,n-1,\ldots, n-k+1,0,\ldots, 0)$. Emerson~(2013) argues that BCU would encourage voters to use a form of tactical voting in which voters cast a partial ballot consisting of only their top choice, which is known as \textit{bullet voting}.  Giving a full $n$ points to that choice and 0 points to all other candidates would clearly give a voter's top candidate the best chances of winning, and if enough voters follow this strategy, the election would devolve into a plurality vote and additionally lead to frequent truncation failures. 

One negative of both variations is that not all votes are considered equal. For example, in an election with $5$ candidates, a voter who fills out a complete ballot ranking all 5 candidates is assigning 15 total points to the candidates.  A voter with a partial ballot of 3 candidates would only give candidates a total of 12 points using BCU and merely 6 points using MBC. While this does incentivize the electorate to fill out complete ballots, it is against democratic principles to count some voters more than others.  Furthermore, in elections such as the 2021 East Hampton Mayor election, as many as 66 candidates may be present on the ballot.  In elections with more than 8 candidates, which account for over 10 percent of the RCV elections in our dataset, it would not feasible to expect each voter to differentiate every candidate enough to provide a complete ranking without arbitrarily ordering the lower end of candidates. MBC in particular penalizes voters who want to submit a partial ballot to the point that they may be pressured to vote insincerely to ensure their top candidate is awarded the maximum number of points.

To address this inequity, we favor the variation of the Borda count introduced by Dummett~(1997) called the \textit{averaged Borda count} (ABC).  In the previous example with 5 candidates and a 3-candidate partial ballot, it initially follows BCU with 5 points for 1st, 4 for 2nd, and 3 for 3rd, but ABC will then average the unassigned 3 points to give 1.5 points to each of the unlisted candidates.  In general, if $k$ candidates are ranked in an $n$-candidate election, they receive the same points as in BCU, while the unranked $n-k$ candidates receive an equal share of the remaining points that a complete ballot would have assigned.  This equates to $\displaystyle \frac{1}{n-k}(1+2+\cdots +(n-k+1)+(n-k))=\frac{(n-k)(n-k+1)}{2(n-k)}=\frac{n-k+1}{2}$.  In total, the points vector~is 
$$\left(n,n-1,n-2,\ldots, n-k+1,\frac{n-k+1}{2},\ldots ,\frac{n-k+1}{2}\right).$$ 
Observe that this method assigns the same total of points for every ballot, whether complete or partial. This also reduces, albeit not completely eliminating, the bullet voting incentive of BCU. A voter in an ABC election who truncates their ballot is making a net neutral move, as their highest ranked candidate that is truncated will lose points, but one or more lowest ranked candidates will gain points. Hence we expect fewer truncation failures from ABC compared BCU. This aligns with theoretical work by Kamwa~(2022), who also showed that MBC cannot exhibit truncation failures.

We will also focus on variations of the Borda count in which the points scale is extended, similar to its use in Nauru with the Dowdall system or in sports award voting. The motivation behind this approach is twofold. One is that the importance of a voter's choice between their top two candidates exceeds that of their choice between subsequent pairs of consecutively ranked candidates.  Hence the relative difference in points should be larger toward the top of the ballot.  Secondly, a larger difference in points, specifically between the top two candidates, would lessen the probability of a majority failure compared to the standard Borda count, which is the most frequent criticism of Borda.  The trade-off in fairness, however, is a possible increase in voter manipulation failures.

For creating such a variation for an election with $n$ candidates, let $p_i$ be the number of points assigned to the $i$th ranked candidate. The standard Borda count uses $p_i=n-i+1$ to assign the $(n,n-1,\ldots, 1)$ points. There are infinitely many point scales that could be used with the assumption that a higher ranked candidate earns at least as many as a lower ranked candidate ($p_i\geq p_{i+1}$), and that the difference between pairs of consecutively ranked candidates is nonincreasing as you move down the ballot ($p_i-p_{i+1}\geq p_{i+1}-p_{i+2})$.  Note that strict inequalities cannot be used on those assumptions, else we would be excluding truncated ballots for the first assumption and likewise the standard Borda count's use of equal differences for the second assumption.  When using a sufficiently large difference $p_1-p_2$, the variation would tend toward a plurality election.  

The two points-scale variations that we introduce are the \textit{Exponential Borda Count} (EBC) and \textit{Quadratic Borda Count} (QBC).  EBC awards 1 point to the last ranked candidate and double the points for each subsequently higher rank.  This leads to the points for the $i$th candidate being the exponential function $p_i=2^{n - i}$, or points vector of $(2^{n-1}, 2^{n-2},\ldots, 4,2,1)$.  QBC also awards 1 point for a last-place vote and then the difference in points between ranks increases by 1 each time with the top two candidates having a difference of $p_1-p_2=n-1$.  In general, $p_{i}-p_{i+1}=p_{i+1}-p_{i+2}+1$, and the formula for the points assigned to the $i$th ranked candidate is the quadratic function
$$p_i=1+\sum_{j=1}^{n-i} j=1+\frac{(n-i)(n-i+1)}{2}.$$
The resulting points vector is 
$$\left(1+\frac{(n-1)n}{2}, 1+\frac{(n-2)(n-1)}{2},\ldots,7,4,2,1\right). $$
Table~\ref{ballotpoints} shows the values of $p_i$ in a 6-candidate election to illustrate the differences in the variations. 

\begin{table}
{\begin{center}
\begin{tabular}{ | c | c | c | c | }
\hline
 Rank & EBC pts & QBC pts & ABC/BCU/MBC pts \\ \hline
 1st & 32 & 16 & 6 \\ \hline
 2nd & 16 & 11 & 5 \\ \hline
 3rd & 8 & 7 & 4 \\ \hline
 4th & 4 & 4 & 3  \\ \hline
 5th & 2 & 2 & 2 \\ \hline
 6th & 1 & 1 & 1 \\ \hline
\end{tabular}
\caption{The points $p_i$ that each BC variation assigns to a $i$th ranked candidate in a complete ballot with 6 candidates. Note that ABC, BCU, and MBC only differ on partial ballots.}
\label{ballotpoints}
\end{center}}
\end{table}

Notice that each variation we've defined assigns 1 point to the last-place candidate.  The Dowdall system, which is currently used in elections in Nauru, uses fractional points with $1/n$ assigned for a last-place vote if there are $n$ candidates.  
This system could be viewed as a similar extension of the Borda count if we convert each fraction to an integer by multiplying each point value by the least common multiple of the numbers $1, 2, \ldots, n$. Then we subtract the new points by $\text{lcm}\{1,2,\ldots, n\}/n-1$ to make $p_n=1$.  
Note that these two operations applied to every rank would not change the order of the candidates with regard to their final point totals. In the case of $n=6$ candidates, the points vector is transformed from $(1,1/2,1/3,1/4,1/5,1/6)$ to $(51, 21, 11, 6, 3, 1)$.  This resembles the rate of increase for EBC, yet it does not increase by the same factor from $p_{i+1}$ to $p_i$ for each $i$, with the factors varying between just under 2 and 3.  Because of this irregularity, we choose to omit this variation from our analysis.

In order to maintain equity between ballots, we will apply the averaged Borda count process regarding partial ballots for our EBC and QBC variations, as opposed to having three versions of each. As an example, using EBC in an election with six candidates in which two of the candidates are ranked, the remaining points would be 15, making a points vector of $(32, 16, 3.75, 3.75, 3.75, 3.75)$.




Now that we have defined the variations, we shift our focus to whether each voting failure can occur with each method.  Table~\ref{FailureChart} displays this information for the currently used plurality and IRV methods, along with the five BC variations we are investigating.  Any failure that is possibly exhibited by these variations will either occur within the empirical analysis in an upcoming section, or is demonstrated by the referenced example.  The failures for which a Borda variation is immune will be proven so in the upcoming theorems, with the exception of MBC avoiding truncation failures, as this was previously shown (Kamwa 2022).

\begin{table}
{
\begin{center}
\begin{tabular}{ | c | c | c | c | c | c | c | c |}
\hline
 Voting Failure & Plur. & IRV & EBC & QBC & ABC & BCU & MBC \\ \hline
 Majority Winner        & N & N & Y(\ref{MajEBC-QBC}) & Y(\ref{MajEBC-QBC}) & Y & Y & Y \\
 Majority Loser         & Y & N & Y(\ref{CondMajLosEBC-QBC}) & Y(\ref{CondMajLosEBC-QBC}) & N(\ref{CondLosABC})  & Y(\ref{CondMajLosBCU}) & N(\ref{MajLosMBC}) \\
 Condorcet Winner       & Y & Y & Y & Y & Y & Y & Y \\
Condorcet Loser        & Y & N & Y(\ref{CondMajLosEBC-QBC}) & Y(\ref{CondMajLosEBC-QBC}) & N(\ref{CondLosABC}) & Y(\ref{CondMajLosBCU}) & Y(\ref{CondLosMBC}) \\ \hline
 Upward Mono.    & N & Y & N & N & N & N & N \\
 Downward Mono.  & N & Y & N & N & N & N & N \\
 No-Show                & N & Y & N & N & N & N & N \\ \hline
 Truncation             & N &N$^*$& Y & Y & Y & Y & N \\
 Compromise             & Y & Y & Y & Y & Y & Y & Y \\
 Spoiler Effect         & Y & Y & Y & Y & Y & Y & Y \\ \hline 
\end{tabular}
\caption{Voting failures that can be exhibited (Y) or cannot occur (N) for each voting methods, with N$^*$ listed for Truncation-IRV since No-Show is the only form of a truncation failure that can occur.}
\label{FailureChart}
\end{center}}
\end{table}

The following are elections of very small voter size that demonstrate particular voting failures for some of our BC variations.  Although these are toy examples to make the points calculation and pairwise comparisons easy, examples can be developed for much larger elections by keeping similar distributions in ballot types. As we will see in Section~\ref{sect:results}, none of these voting failure types would have occurred using the specified Borda variation on previous U.S. elections, hence the need for examples to demonstrate their possibility.  We begin by demonstrating that EBC and QBC can exhibit a majority winner failure. 

\begin{example}\label{MajEBC-QBC}
    Consider an election with candidates $A$, $B$, and $C$ using EBC, which note is equivalent to QBC when there are only 3 candidates since both would use the points vector $(4,2,1)$. Suppose there are 4 ballots ordering the candidates $A>B>C$ and $3$ with the ranking $B>C>A$.  Candidate $A$ clearly has a majority of first-place votes, yet $A$ only receives 19 points compared to candidate $B$'s 20 points, creating a majority winner failure.
\end{example}

Note the previous example is also a Condorcet winner failure, but we only focused on the majority aspect because it does not occur in our upcoming empirical analysis, whereas Condorcet winner failures are exhibited for every variation of the Borda count.  The next example shows that EBC and QBC can both have majority loser and Condorcet loser failures.  Although the Borda count cannot exhibit these two failures, as shown by Fishburn and Gehrlein~(1976), any extension of Borda with a points vector where the difference $p_i-p_{i+1}$ increases as $i$ goes from $1$ to $n-1$ can have these failures if the number of voters is sufficiently large.

\begin{example}\label{CondMajLosEBC-QBC}
    Consider an election with candidates $A$, $B$, and $C$ using EBC or the equivalent QBC.  Assume there are 4 ballots ranking $A>B>C$, 3 ranking $A>C>B$, 4 with $B>C>A$, and 4 with $C>B>A$.  Notice that $A$ is ranked last in 8 of the 15 ballots, which is a majority and leads $A$ to lose in pairwise competition to both $B$ and $C$. Candidate $A$ wins the election, however, by earning 36 points while $B$ earns 35 and $C$ receives 34.  This election exhibits both a majority loser and Condorcet loser failure.
\end{example}

The final two examples relate to Condorcet and majority loser failures for BCU and MBC.  Notice that only Condorcet loser is shown for MBC, as we will prove that it is immune to majority loser failures. 

\begin{example}\label{CondMajLosBCU}
    Consider an election using BCU with candidates $A$, $B$, and $C$ and only five votes: 2 ranking only $A$, 2 ballots with $B>C>A$, and 1 ballot ranking $C>B>A$.  Observe that $A$ is ranked last in a majority of ballots, particularly 3 of the 5, which also implies $B$ and $C$ are both favored head-to-head over $A$.  Yet the total points for $A$ is 9, while $B$ and $C$ earn 8 and 7 points, respectively.  This exhibits both a majority loser and Condorcet loser failure.
\end{example}

\begin{example}\label{CondLosMBC}
    Consider an election using MBC with 3 candidates $A$, $B$, and $C$ and the following ballots: 2 ranking $A>B>C$, 2 with $A>C>B$, 5 partial ballots ranking only $B$, and 5 ranking only $C$.  Candidates $B$ and $C$ are both favored over $A$ by a margin of 5 to 4, implying $A$ is a Condorcet loser candidate.  However, $A$ wins the election with 12 points compared to 11 points for each of $B$ and $C$, resulting in a Condorcet loser failure.
\end{example}

It was shown (Fishburn and Gehrlein 1976) that the Borda count, assuming all ballots are complete, cannot have a Condorcet loser failure.  This implies a majority loser failure is also impossible.  We now show two variations of Borda, particularly MBC and ABC, are also immune to majority loser failures, and in the case of ABC, Condorcet loser failures as well.

\begin{theorem}\label{MajLosMBC}
The modified Borda count (MBC) can never exhibit a majority loser failure.
\end{theorem}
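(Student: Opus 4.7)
My plan is to show that whenever $X$ is a majority loser under MBC, the total score $T(X)$ is strictly less than $\tfrac{1}{n}\sum_Y T(Y)$, which forces some $Y\neq X$ to outscore $X$ and prevents $X$ from winning. The argument is per-ballot averaging against a well-chosen baseline.

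Fix $n$ candidates and $N$ ballots, and let $k_b$ denote the number of candidates ranked on ballot $b$. Under MBC the points on $b$ sum to $k_b(k_b+1)/2$, so the per-candidate average is $\bar{s}(b):=k_b(k_b+1)/(2n)$. Writing $s(X,b)$ for the points $X$ receives on $b$, let $L$ be the set of ballots on which $X$ carries a last-place vote. By the convention described earlier, $L$ consists of (a) complete ballots ranking $X$ last, where $s(X,b)=1$ and $\bar{s}(b)=(n+1)/2$, and (b) partial ballots on which $X$ is the unique unranked candidate, where $s(X,b)=0$ and $\bar{s}(b)=(n-1)/2$. A direct calculation in each case yields the identity
\[ s(X,b) - \bar{s}(b) = -\frac{n-1}{2} \qquad \text{for every } b \in L. \]

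The second ingredient I would establish is the companion bound $s(X,b) - \bar{s}(b) \leq (n-1)/2$ for every $b \notin L$. When $X$ sits at some position $i$ on a length-$k_b$ ballot, this reduces to maximizing $(k_b-i+1) - k_b(k_b+1)/(2n)$ over $i\in\{1,\dots,k_b\}$ and $k_b\in\{1,\dots,n\}$, and the supremum $(n-1)/2$ is attained when $X$ is ranked first on a complete ballot; when $X$ is unranked among at least two blanks, $s(X,b)=0$ and the difference is nonpositive. Combining both estimates with the majority hypothesis $|L|>N/2$,
\[ T(X) - \sum_b \bar{s}(b) \;\leq\; \frac{n-1}{2}\bigl(|L^c| - |L|\bigr) \;<\; 0. \]
Since $\sum_b \bar{s}(b) = \tfrac{1}{n}\sum_Y T(Y)$, rearranging gives $(n-1)T(X) < \sum_{Y\neq X}T(Y)$, so by pigeonhole some $Y\neq X$ satisfies $T(Y)>T(X)$ and $X$ cannot win.

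The main obstacle is verifying the uniform upper bound $s(X,b)-\bar{s}(b) \leq (n-1)/2$ across all ballot lengths, all possible positions of $X$, and the unranked case, and checking that its extremal configurations necessarily lie outside $L$. Once that case analysis is settled, the conclusion follows immediately from the averaging identity.
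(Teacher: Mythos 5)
Your proof is correct, but it takes a genuinely different route from the paper's. The paper argues by contradiction via a reduction to the complete-ballot case: it completes every partial ballot (inserting the alleged majority loser $A$ immediately after the ranked candidates when $A$ is unranked, and appending the unranked candidates below $A$ otherwise), checks that $A$ gains at least as many points as every rival while keeping the same number of last-place votes, and thereby manufactures a standard Borda election with a majority loser failure, contradicting Fishburn and Gehrlein (1976). You instead give a direct, self-contained averaging argument: since the MBC scores on ballot $b$ sum to $k_b(k_b+1)/2$, comparing $X$'s score to the per-candidate average $\bar{s}(b)$ gives exactly $-(n-1)/2$ on every ballot carrying a last-place vote for $X$ --- the complete-ballot case $1-(n+1)/2$ and the single-blank case $0-(n-1)/2$ coincide, which is the small miracle the argument rests on --- and at most $+(n-1)/2$ on every other ballot, because $k-k(k+1)/(2n)=k(2n-k-1)/(2n)\le (n-1)/2$ for $1\le k\le n$. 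A majority of last-place votes then forces $T(X)<\tfrac{1}{n}\sum_Y T(Y)$, so some rival outscores $X$. I checked the case analysis and it goes through; the one ``obstacle'' you flag is actually a non-issue, since the upper bound $(n-1)/2$ holds for every ballot whatsoever (last-place ballots included, where the difference is negative), so you never need to locate its extremizers --- and incidentally the supremum is also attained at $k=n-1$, $i=1$, not only on complete ballots. As for what each approach buys: the paper's reduction is short modulo the cited theorem and its ballot-completion template is reused for the ABC result (Theorem 4.2); yours is elementary, reproves the complete-ballot base case and the MBC extension in one stroke, exhibits the quantitative margin, and makes transparent why BCU escapes the theorem (a bullet vote there puts $X$ a full $n-1$ above the ballot average, breaking the $\pm(n-1)/2$ balance, consistent with Example 3.3).
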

\begin{proof}
     We will show by contradiction that MBC, similar to the Borda count, cannot have a majority loser failure.  Suppose there is an election where MBC does have this failure with candidate $A$ winning despite having a majority of last-place votes. Note that these last-place votes are all from complete ballots, with $A$ being last on over half of all total votes.  There must, however, exist partial ballots, else this election would contradict the Borda count's avoidance of this failure.  

    Consider a partial ballot ranking $k$ of the $n$ candidates as $C_1>C_2>\cdots >C_k$ and leaving the remaining candidates $D_1, D_2,\ldots, D_{n-k}$ unranked.  Note the points vector for this partial ballot is $(k,k-1,\ldots, 1, 0,\ldots,0)$.  
    
    In the case of the majority loser candidate $A$ being unranked, say as the candidate $D_1$, we can complete this ballot as $C_1>C_2>\cdots >C_k>A>D_2>D_3>\cdots >D_{n-k}$.  The new points vector would be the standard Borda vector of $(n,n-1,\ldots, 1)$ where candidate $A$ now receives $n-k$ points after initially receiving 0 points. Observe that all candidates $C_j$ also increased their points by $n-k$, while each $D_j$ for $j\neq 1$ increased by a value less than $n-k$.  Therefore, candidate $A$ added at least as many points compared to all other candidates from completing all partial ballots of this type.

    On the other hand, if candidate $A$ was ranked where $C_i=A$ for some $i=1$ to $k$, then we complete the ballot by adding the unranked candidates $D_1>D_2>\cdots>D_{n-k}$ in that order after $C_k$.  The points vector again becomes the standard $(n,n-1,\ldots, 1)$ with $A$ gaining $n-k$ points, the same as the other candidates $C_j$ and $D_1$.  The lower ranked $D_j$ candidates only increase by at most $n-k-1$ points, so once again, $A$ increases by at least as many points through this ballot completion case as all other candidates.

    We now have a Borda count election without any partial ballots, and since $A$ gained as many or more points than any other candidate, $A$ remains the winner of the election.  Although new last-place votes were introduced, the number of last-place votes for $A$ remained the same, and hence this candidate still has a majority of last-place votes.  Therefore, this new election with all ballots completed has a majority loser failure, which cannot exist with the Borda count, hence a contradiction.  Thus, MBC cannot have such a failure.
\end{proof}

\begin{theorem}\label{CondLosABC}
The averaged Borda count (ABC) can never exhibit a Condorcet loser failure, and thus also cannot have a majority loser failure.
\end{theorem}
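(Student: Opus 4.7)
The plan is to adapt the classical Fishburn--Gehrlein argument---which shows the standard Borda count avoids Condorcet loser failures by decomposing each candidate's score into a pairwise-win count plus a constant---to the partial-ballot setting of ABC. The key insight is that the averaging process in ABC can be recorded by giving each ``both unranked'' pair a weight of $1/2$ in the pairwise count, and this half-weight cancels exactly where it needs to.

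First I would define, for each ballot $b$ and each ordered pair $(X,Y)$ of distinct candidates, an effective beat count $F_b(X,Y)$ equal to $1$ if $X$ is ranked above $Y$ on $b$ (treating every ranked candidate as above every unranked one), $0$ if $Y$ is ranked above $X$, and $1/2$ if both $X$ and $Y$ are unranked on $b$. Plainly $F_b(X,Y)+F_b(Y,X)=1$. A short case check then shows that the ABC points assigned to $X$ by ballot $b$ equal $1+\sum_{Y\neq X}F_b(X,Y)$: if $X$ sits at position $i$ among $k$ ranked candidates, the sum is $(k-i)+(n-k)=n-i$, matching the Borda value $n-i+1$; if $X$ is unranked, the sum is $(n-k-1)/2$, matching the averaged value $(n-k+1)/2$.

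Summing over the $V$ ballots, $X$'s ABC total is $V+\sum_{Y\neq X}F(X,Y)$, where $F(X,Y)=\sum_b F_b(X,Y)$ satisfies $F(X,Y)+F(Y,X)=V$. Suppose for contradiction that a Condorcet loser $A$ wins. The paper's convention for Condorcet comparisons ignores both-unranked ballots, but those contribute $1/2$ to each of $F(A,B)$ and $F(B,A)$ and hence cancel in the difference; so the Condorcet loser hypothesis yields $F(A,B)<F(B,A)$, i.e., $F(A,B)<V/2$, for every $B\neq A$. Therefore $A$'s ABC total is strictly less than $V+(n-1)V/2=V(n+1)/2$. Because every ABC ballot distributes exactly $n(n+1)/2$ points (the defining feature of ABC), the average candidate total equals $V(n+1)/2$, so some other candidate strictly outscores $A$, contradicting $A$'s victory. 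The majority-loser corollary is immediate: a majority of last-place ballots for $A$ forces the head-to-head count $N(B,A)>V/2$ for every $B\neq A$, hence $N(A,B)<N(B,A)$, so $A$ is a Condorcet loser.

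The main obstacle is designing the effective beat count so that it serves both roles at once: the $1/2$ weight on both-unranked pairs is forced by the formula for ABC points (via the averaging over unranked positions), while those same $1/2$'s must cancel in the pairwise difference so that the paper's Condorcet convention (which drops both-unranked ballots entirely) translates cleanly into the inequality $F(A,B)<V/2$. Once this bookkeeping is in place, the remainder is a short averaging argument in the spirit of Fishburn and Gehrlein (1976).
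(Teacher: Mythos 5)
Your proof is correct, but it takes a genuinely different route from the paper. The paper's proof is a reduction: it doubles every complete ballot and completes every partial ballot twice (once with the unranked candidates appended in some order, once in the reverse order), checks that every candidate's ABC score exactly doubles and that every pairwise margin is preserved (the both-unranked ballots contribute $z$ to each side), and then invokes Fishburn--Gehrlein (1976) as a black box on the resulting complete-ballot Borda election. You instead give a direct, self-contained averaging argument: the decomposition
$$\text{ABC}_b(X)=1+\sum_{Y\neq X}F_b(X,Y),\qquad F_b(X,Y)+F_b(Y,X)=1,$$
with weight $1/2$ on both-unranked pairs, is verified correctly in both cases ($1+(k-i)+(n-k)=n-i+1$ for a ranked candidate, $1+(n-k-1)/2=(n-k+1)/2$ for an unranked one), the half-weights cancel in each pairwise difference so the Condorcet-loser hypothesis gives $F(A,B)<V/2$, and the constant-sum property of ABC ballots forces $A$ strictly below the average score $V(n+1)/2$, hence below some other candidate. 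Your majority-loser reduction also matches the paper's (a majority of last-place votes forces every head-to-head loss). What your approach buys is transparency and self-containment---it reproves the Fishburn--Gehrlein fact rather than citing it, and it quantifies why a Condorcet loser cannot win (it must score below average); what the paper's doubling construction buys is reusability---the same completion trick transfers other complete-ballot Borda properties to ABC without redoing the score algebra.
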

\begin{proof}
We follow a similar approach to the previous result by supposing for a contradiction that there is an ABC election with a Condorcet loser failure.  Assume $A$ is the winning candidate who loses in pairwise competition to all other candidates.  The standard Borda count does not have Condorcet loser failures, so there must exist partial ballots.

Consider a partial ballot ranking candidates $C_1>C_2>\cdots >C_k$ with candidates $D_1,D_2,\ldots, D_{n-k}$ unranked.  The points vector using ABC would be $$\left(n,n-1,\ldots, n-k+1, \frac{n-k+1}{2},\frac{n-k+1}{2},\ldots ,\frac{n-k+1}{2}\right).$$

We now consider a new election in which each of the complete ballots are included twice, and we complete the previously partial ballots in two ways.  First, we add in the unranked candidates to form the ballot $C_1>C_2>\cdots >C_k>D_1>D_2>\cdots >D_{n-k-1}>D_{n-k}$.  We also form a complete ballot using the reverse order on the $D_j$ candidates as $C_1>C_2>\cdots >C_k>D_{n-k}>D_{n-k-1}>\cdots >D_{2}>D_1$.  Each option now has the standard $(n,n-1,\ldots , 1)$ points vector.

Focusing on the points received by each candidate in the new election, the originally complete ballots are now included twice, resulting in double the points for each candidate.  Meanwhile, the originally partial ballots assign $n-j+1$ each time to a candidate $C_j$, so this candidate also receives twice as many points in the new election.  Finally, a candidate $D_j$ earns $n-k-j+1$ and $j$ points, respectively, in the two forms of completed ballots.  Together, this is $n-k+j-1+j=n-k+1$, which is exactly twice the $(n-k+1)/2$ points received as an unranked candidate through ABC. Ultimately, every candidate's total points are doubled in the new election with all complete ballots, maintaining the win for candidate A in this Borda count election.

We now focus on the pairwise comparisons in which candidate $A$ lost to every other candidate in the original election.  Let us consider a single candidate $B$ that is favored over $A$ in $x$ ballots with $A$ being favored over $B$ in $y$ ballots.  This includes partial ballots in which one of the candidates is ranked and the other is unranked. Any voter with a partial ballot leaving both $A$ and $B$ unranked does not contribute to either $x$ or $y$.  Since $A$ loses all head-to-head comparisons, we know $x>y$.

In the new Borda count election, all previously complete ballots are included twice. Any ballot where one of $A$ or $B$ was ranked and the other unranked would also count twice toward the pairwise comparison.  This is because the originally ranked candidate is favored in both new ballots, despite the order of the newly ranked candidates being different in the two forms of completing ballots.  This accounts for all cases where one of $A$ or $B$ was ranked higher, and thus there are so far $2x$ ballots with $B$ favored over $A$, and $2y$ with the opposite relative order.  Assuming there were $z$ partial ballots that left $A$ and $B$ unranked, the two forms of ballot completion with reverse ordering result in $A$ ranked higher in $z$ newly completed ballots and $B$ also being higher in $z$ ballots.  Now that we have covered all cases of both being ranked, one ranked and one unranked, and both unranked, $B$ is favored over $A$ in $2x+z$ ballots, while $A$ is favored in $2y+z$ ballots.  The fact that $x>y$ directly implies $2x+z>2y+z$, meaning that $B$ wins the head-to-head comparison.  

We have shown that $A$ wins in this Borda count election with each ballot being complete.  Additionally, we see that $A$ still loses each pairwise comparison, making them the Condorcet loser candidate.  Hence we have a Condorcet loser failure for the Borda count, which is a contradiction.  Thus, it is impossible for ABC to have a Condorcet loser failure.  Since a majority loser candidate must also be a Condorcet loser, the result holds for majority loser failures as well.
\end{proof}

\section{Data and Methodology}

\begin{table}
{\begin{center}
\begin{tabular}{ | c | c | c |  }
\hline
 Location & Num. of Elections & Years \\ \hline
 Alaska & 57 & 2020, 2022 \\ \hline
 Albany, CA & 2 & 2022 \\ \hline
 Berkeley, CA & 28 & 2010--2020 \\ \hline
 Bloomington, MN & 2 & 2023   \\ \hline
 Boulder, CO & 1 & 2023  \\ \hline
 Burlington, VT & 5 & 2006, 2009, 2023 \\ \hline
 Cambridge, MA & 2 & 2023 \\ \hline
 Corvallis, OR & 2 & 2022 \\ \hline
 Easthampton, MA & 1 & 2021 \\ \hline
 Eastpointe, MI & 1 & 2019 \\ \hline
 Elk Ridge, UT & 1 & 2021 \\ \hline
 Hawaii & 1 & 2020 \\ \hline
 Kansas & 1 & 2020 \\ \hline
 Las Cruces, NM & 3 & 2019 \\ \hline
 Maine & 10 & 2018, 2020, 2022 \\ \hline
 Minneapolis, MN & 73 & 2009--2023 \\ \hline
 Minnetonka, MN & 6 & 2021, 2023 \\ \hline
 New York City, NY & 64 & 2021 \\ \hline
 Oakland, CA & 56 & 2010--2022 \\ \hline
 Payson, UT & 1 & 2019 \\ \hline
 Pierce County, WA & 4 & 2008, 2009 \\ \hline
 Portland, ME & 4 & 2021, 2022 \\ \hline
 San Francisco, CA & 64 & 2004--2022 \\ \hline
 San Leandro, CA & 17 & 2010--2020 \\ \hline
 Santa Fe, NM & 3 & 2018 \\ \hline
 Springville, UT & 2 & 2021 \\ \hline
 St. Louis Park, MN & 2 & 2021 \\ \hline
 Takoma Park, MD & 4 & 2022 \\ \hline
 Vineyard, UT & 2 & 2019, 2021 \\ \hline
 Woodland Hills, UT & 1 & 2021 \\ \hline
 Wyoming & 1 & 2020 \\ \hline
\end{tabular}
\caption{List of 421 RCV elections within our dataset, ranging from local elections to U.S. Presidential primaries}
\label{ElectionData}
\end{center}}
\end{table}

The election data used in our empirical analysis consists of 421 RCV elections from a publicly available database~(Otis 2022). Table~\ref{ElectionData} summarizes the dataset, including the states and cities where the elections occurred and the range of years for each jurisdiction. The elections include local city council, school board, and mayor offices in addition to statewide elections for the U.S. House of Representatives and Presidential primaries. 

In using the RCV data where voters chose their ranking based on IRV deciding the winner, we recognize a potential methodological flaw by which some voter preferences may differ if a points-based method was used.  Strategic voting, such as truncating a ballot to remove your favored candidate's top competitor, would nevertheless have little effect on potential verifiable failures.  Meanwhile, the purpose of voter manipulation failures is to pinpoint the potential for this exact behavior and identify it as an unfair outcome, erasing any methodological concerns.




There are multiple considerations needed to process ballots and implement our points-based methods. For instance, there are many ballots in which a candidate was listed multiple times, ranks were skipped, or multiple candidates were listed in the same ranking, identified as an overvote. Any appearance of a candidate after the first instance was removed from the ballot, likewise for skipped or undeclared ranks.  Then these were processed as partial ballots.  Finally, for cases of overvotes, there was no way to determine which candidates were in the overvote, so all candidates below that ranking were removed, leaving only the candidates ranked above the overvote rank.

Another determination that must be made when implementing a points-based voting system is how to enumerate ballots that include write-in candidates.  Across all elections in which write-in candidates were allowed to be included, they received no more than $1.6\%$ of the total first-place votes.  Moreover, this number includes all write-in candidates collectively since data from elections offices usually reports these as a single candidate named Write-In or a similar designation.
Therefore, we opted for removing all write-in candidates since they had a negligible effect on the election outcome.

The resulting ballots were then aggregated in preference profiles, where Python code was used to run each BC variation to determine the winner. For verifiable failures, it is algorithmically straightforward to determine if a majority and Condorcet winner or loser exist by counting first-place votes and running each head-to-head comparison.

The presence of voter/candidate manipulation failures are more difficult to identify from a computational standpoint. For a truncation failure, our algorithm considers each losing candidate and truncates a set of ballots to give them an optimal chance of overtaking the winning candidate's point total. This is accomplished by removing all ranks starting with the initial winner from each ballot where the particular losing candidate was ranked higher. If that candidate now wins, there is a truncation failure.  There are scenarios, however, where a third candidate would now win, forcing us to further analyze those examples to determine if a smaller subset of ballots or if a different truncation strategy, such as removing this third candidate as well, would trigger a failure.  Compromise failures follow a similar algorithmic process in which a losing candidate ranked higher than the winner is moved to the top of all such ballots to determine if they would now win.

To identify spoiler effects, we use a brute-force approach for elections with slates of 10 or fewer candidates.  By sequentially removing each subset of losing candidates, the algorithm determines if the winner has changed, resulting in a spoiler effect. Unfortunately, it becomes computationally infeasible to perform this evaluation for every subset if there is a larger slate. To guarantee accuracy, we limited our analysis to the 396 elections with 10 or fewer candidates.

\section{Analysis of Selected Elections}

 We begin a discussion of whether voting failures occurred in particular RCV elections using the BC variations with the previously mentioned \textbf{2022 Alaska US House Special Election}, as seen in Table~\ref{Alaska}. A partial ballot of two candidates is displayed with the unlisted candidate in the third ranking to simplify the presentation of the profile, although note that we do consider those ballots as being partial when applying MBC and BCU.

Table~\ref{AKResults} displays the point totals for each of the five variations. Compared to the IRV outcome, the BC methods have much more fair results for this election.  First, recall that monotonicity and no-show paradoxes cannot occur with these points-based methods as they do with IRV.  We also observe that Begich receives the most points with any method, which means there was no Condorcet winner failure, implying no verifiable failures are observed.  

\begin{table}
{\begin{center}
\begin{tabular}{ | c | c | c | c | c |  }
\hline
 Candidate & QBC/EBC pts & ABC pts & BCU pts & MBC pts\\ \hline
 Begich & 454203.5  &  400369.5 & 331011 &  233616\\ \hline
 Peltola & 451465 &  375646 & 284939  & 202658 \\ \hline
 Palin & 414972.5 & 355962.5 & 270853 & 184211 \\ \hline
\end{tabular}
\caption{The results from the five BC variations on the Alaska Special Election for the US House in August 2022. With three candidates, a full ballot would earn $(3,2,1)$ points for ABC/MBC/BCU and $(4,2,1)$ for each of QBC and EBC.}
\end{center}}
\label{AKResults}
\end{table}

Since Begich wins each head-to-head matchup between candidates, removing any losing candidate would result in Begich still being the winner.  Thus, the spoiler effect cannot occur with any of our BC methods.  We will demonstrate that truncation and compromise failures are both possible with at least one of the methods, but it is notable that both ABC and MBC avoid all voting failures.

Considering a possible truncation failure, Begich's margin of victory for EBC was only 2738.5 points, and likewise for QBC since they are equivalent when there are only 3 candidates.  If 5478 of the 47407 Peltola$>$Begich$>$Palin voters truncate their ballots to only vote for Peltola without ranking Begich or Palin, then Begich would now only receive 1.5 points instead of 2 points from each of those voters.  This change would result in Peltola now winning the election.  In contrast, ABC would avoid this failure as the margin of victory of nearly 25000 is too large to overcome by any truncation strategy. BCU would also exhibit a truncation failure, although it would require 23037 of 47470 Peltola$>$Begich$>$Palin and Peltola$>$Begich voters to truncate to a single-candidate ballot to trigger this failure.  As noted previously, MBC cannot exhibit a truncation paradox in any election.

A compromise failure can also occur with EBC and QBC in this election.  Some voters who ranked Palin$>$Peltola$>$Begich could decide Peltola has a better chance of winning compared to Palin and ultimately make a compromise by moving Peltola to the first rank. If 1370 of these 3678 voters make this change, Peltola would gain 2 points per voter by moving from 2nd to 1st, overtaking Begich and winning the election.  For ABC, the same set of voters comprising in this way would only gain 1 point each for Peltola.  With such a large margin of victory, there are not enough Palin$>$Peltola$>$Begich to enact a compromise failure.  BCU and MBC avoid a compromise failure by similar reasoning.

We now focus on some elections where the BC methods exhibit a variety of verifiable and manipulation failures, beginning with the \textbf{2022 Alaska House District 6 Election} shown in Table~\ref{Alaska6}.  While also being a 3-candidate election from Alaska, this state legislature election differs from the previously discussed two Republican, one Democrat special election by having a slate of a Republican candidate (Sarah Vance) and two Independent candidates (Ginger Bryant and Louis Flora).  This change in partisan dynamics is likely the cause of verifiable failures of the BC Methods since over 90$\%$ of the voters who listed one of the independent candidates first either left Vance off their ballot or listed her third.

\begin{table}
{\scriptsize
\begin{center}
\begin{tabular}{ | c | c | c | c | c | c | c | c | c | c | }
\hline
 $\#$Votes & 125 & 185 & 41 & 1708 & 2118 & 382 & 3907 & 318 & 735 \\ \hline
 1st & Bryant & Bryant & Bryant & Flora & Flora & Flora & Vance & Vance & Vance \\
 2nd & -- & Flora & Vance & -- & Bryant & Vance & -- & Bryant & Flora \\
 3rd & -- & Vance & Flora & -- & Vance & Bryant & -- & Flora & Bryant \\ \hline
\end{tabular}
\caption{The Alaska State House of Representations District 6 Election in 2022.}
\label{Alaska6}
\end{center}}
\end{table}

The Republican Vance is both the Condorcet winner, as well as the majority winner with 4960, or about $52\%$, of the 9519 first-place votes.  First note that the IRV election declared Vance as the winner, and no voting failures were observed~(Graham-Squire and McCune 2022). Three of the BC variations - EBC, QBC, and BCU - result in Vance winning the election as well; however, ABC and MBC choose the Independent candidate Flora.  Hence these two methods exhibit Condorcet and majority winner failures, the latter vulnerability being one of the primary criticisms of Borda methods.  As we will observe in an upcoming section, this election is a rare example of this failure occurring, and in fact, this is the only election in which more than one variation has a majority winner failure.

Since Vance is a majority candidate with only two other candidates, the spoiler effect and compromise failures are easy to observe or dismiss.  Removing Bryant from the slate would precipitate Vance now becoming the winner of the two-candidate election with ABC and MBC, resulting in a spoiler effect.  Compromise failures, on the other hand, cannot occur with any of the five variations since there are such few Bryant$>$Flora$>$Vance or Bryant$>$Vance$>$Flora voters who could attempt this manipulation.  For instance, the closest margin of victory is Flora winning over Vance using ABC by 92.5 points.  However, there are only 41 voters who could move Vance ahead of Bryant, leaving Vance still trailing Flora by 51.5 points.

The only method that has a truncation failure with this Alaska House race is ABC, which is due to the small 92.5 point advantage for Flora. If 186 of the 735 Vance$>$Flora$>$Bryant truncated their ballot to only Vance, Flora's point total would decrease .5 points per ballot.  Therefore, she would now lose the election to Vance through this voter manipulation failure.

The next election we examine is the \textbf{2021 Minneapolis City Council Ward 2 Election}, which featured five candidates - Tom Anderson, Yusra Arab, Guy Gaskin, Cam Gordon, and Robin Worlobah - where voters were only allowed to vote for up to three candidates in a truncated ballot.  Note with this election, and some of the subsequent examples, we omit the preference profile because of its size. Despite only ranking up to three candidates, voters submitted 84 unique ballots that would need to be displayed.  This election is unique in that it is one of only two elections without a Condorcet candidate (McCune and McCune 2023).  This is due to a Condorcet cycle in which Arab is preferred over Gordon, Gordon over Worlobah, and Worlobah over Arab.  Worlobah ultimately won the election by less 20 votes in the final IRV round over Arab, while also leading in the initial plurality round by less than 50 votes.  Our BC methods all agree on Arab as the winner, but since there is no majority or Condorcet candidate, there is no verifiable failure for either IRV or the BC variations.  IRV was shown~(Graham-Squire and McCune 2022), however, to have demonstrated both upward and downward monotonicity paradoxes, the spoiler effect, and a compromise failure.

The manipulation failures are especially prominent in this election for each BC method.  All four methods that can exhibit a truncation failure (recall that MBC is immune to this) have such a failure.  Additionally, all five BC methods have a compromise failure, and all but BCU have a spoiler effect. To see, for example, how a compromise failure occurs with ABC, we observe that Arab wins with 32799.5 points to Gordon's second place showing of 31850.5, a margin of victory of 949 points.  There are 1589 voters who favor a third candidate while ranking Gordon higher than Arab.  If 950 of them compromise and move Gordon to the top of their ballot, he would gain at least the 950 points needed to defeat Arab, demonstrating a compromise failure.  It is worth noting the difficulty for enacting this failure, as many of these 1589 voters have Gordon as their third choice with Arab unranked, along with having differing top ranked candidates.  Organizing a majority of these voters to compromise in this way would be very challenging if not impossible.

The \textbf{2022 Oakland, CA School Board District 4 Election} is very similar in that it also includes a Condorcet cycle between the three candidates Resnick, Hutchinson, and Manigo.  Four of our BC variations agree on Resnick as the winner of this election, while MBC chooses the winner to be Hutchinson, who also was the IRV election winner. While Resnick has a large lead in first-place votes, as gleaned from Table~\ref{Oakland}, he is also listed third or left off of the most ballots, making this a very close election in terms of total points.  Therefore, akin the Minneapolis election, this school board election features numerous manipulation failures with all five BC methods exhibiting compromise failures and spoiler effects, although only ABC and BCU are vulnerable to a truncation failure.

\begin{table}
{\scriptsize
\begin{center}
\begin{tabular}{ | c | c | c | c | c | c | c | c | c | c | }
\hline
$\#$Votes & 2327 & 2337 & 3563 & 3740 & 3095 & 3180 & 1846 & 4194 & 2150 \\ \hline
 1st & H & H & H & R & R & R & M & M & M \\
 2nd & -- & R & M & -- & H & M & -- & H & R \\
 3rd & -- & M & R & -- & M & H & -- & R & H \\ \hline
\end{tabular}
\caption{The Oakland School Board District 4 Election in 2022 with the candidates Hutchinson (H), Resnick (R), and Manigo (M).}
\label{Oakland}
\end{center}}
\end{table}

The \textbf{2008 Pierce County, WA County Executive Election}, which featured a compromise failure in the IRV election~(Graham-Squire and McCune 2022), is another example in which multiple failures exist for our BC methods.  We first observe in this four-candidate election consisting of Bunney, Goings, Lonergan, and McCarthy, the five Borda variations do not choose the same winner.  ABC, BCU, and MBC agree with IRV as having Pat McCarthy win the election, while EBC and QBC select Shawn Bunney, who we note is the plurality winner.  Since McCarthy is the Condorcet winner, EBC and QBC exhibit a Condorcet winner failure.  With this failure comes a spoiler effect for those two methods, although not for the other three, since removing Goings and Lonergan would result in McCarthy winning head-to-head.

The extremely close nature of this election - McCarthy won the IRV final round 51$\%$ to 49$\%$ - causes several voter manipulation failures. Using ABC, for instance, the candidates McCarthy, Goings, and Bunney finish within 21000 points of one another, which is less than 3$\%$ of the winner McCarthy's 778731.5 points.  This makes truncation and compromise failures easy to occur.  In fact, all five BC methods exhibit a compromise failure and only EBC and MBC avoid a truncation failure.

The \textbf{2009 Burlington, VT Mayor Election} is an election where the IRV result included several unfair aspects, namely a monotonicity paradox, Condorcet winner failure, spoiler effect, and compromise failure~(Graham-Squire and McCune 2022). Of the five candidates, the Condorcet winner was Montroll, while the IRV winner was Kiss.  The five BC variations successfully select Montroll as the winner, avoiding any Condorcet failures as well as spoiler effects.  There are manipulation failures present, including a compromise failure for EBC and QBC, as well as a truncation failure for EBC and BCU.  For instance, when using BCU, Montroll defeats Kiss with 26194 points to 23427.  If 692 of the 1328 voters whose top choice is Kiss followed in second by Montroll truncate their ballot to only list Kiss, then Montroll loses 4 points per ballots.  This causes the winner to change from Montroll to Kiss, resulting in the truncation failure.

The \textbf{2020 San Francisco, CA Board of Supervisors District 7 Election} was one of the two occurrences of a downward monotonicity paradox within the IRV election~(Graham-Squire and McCune 2022).  Our five BC methods agree with the IRV (and Condorcet) winner Melgar, and while they automatically avoid the monotonicity paradox, we again find several manipulation failures. Particularly, EBC and MBC exhibit spoiler effects, all five methods contain a compromise failure, and all but MBC have a truncation failure.  The cause of this large number of failures is twofold: the points earned by the top candidates are very close and with seven candidates on the ballot, the advantages of truncating or compromising are enhanced.  For example, the points given out by QBC for Melgar and second-place Engardio are 445737 and 430048. While the more than 15000 point difference seems large, these are quite close as 17.99$\%$ and 17.36$\%$ of the total points.  Meanwhile a fully-ranked ballot with seven candidates gives out 63 points with 22 to the top candidate and 16 to the second-ranked one, making deficits easy to overcome through swapping choices or removing candidates.

\begin{table}
{\scriptsize
\begin{center}
\begin{tabular}{ | c | c | c | c | c | c | c | c | c | c | }
\hline
$\#$Votes & 35009 & 29717 & 15573 & 33452 & 33546 & 13493 & 11721 & 10854 & 11731 \\ \hline
 1st & C & C & C & R & R & R & V & V & V \\
 2nd & -- & R & V & -- & C & V & -- & C & R \\
 3rd & -- & V & R & -- & V & C & -- & R & C \\ \hline
\end{tabular}
\caption{The 2021 New York City Queens Borough President Democratic Primary with candidates Crowley (C), Richards (R), and Van Bramer (V).}
\label{Queens}
\end{center}}
\end{table}

The election where the BC variatiosn exhibit the largest number of voting failures is the \textbf{2021 New York City Queens Borough President Democratic Primary Election}, seen in Table~\ref{Queens}.  In this three-candidate election with 195096 voters, the top two candidates, Richards and Crowley, were separated by only 192 first-place votes.  The points for these two candidates across the five variations differed by as few as 432 points (344755 to 344323) for BCU and by at most 1597.5 points (502979.5 to 501382) for EBC and QBC. Not only do all five methods have a Condorcet winner failure by selecting Crowley over Richards, but these extremely close results also cause all five methods to have a spoiler effect and compromise failure, with only MBC avoiding a truncation failure.  Recalling that MBC cannot exhibit this failure, we observe that for an election without a majority winner, the maximum number of voting failures occur within this single election. 

Lastly, we examine the \textbf{2018 San Leandro, CA County Council District 3 Election}, which features a rare occurrence that highlights a particular flaw in the application of BCU.  By definition, BCU only awards points to candidates included on a voter's ballot.  Applying this to a two-candidate election is problematic and leads not only to a truncation failure where voters are incentivized to vote only for their preferred candidate, but also majority and Condorcet winner failures.  This seems paradoxical for a two-candidate election, but it is possible with how BCU rewards bullet voting.  As seen in Table~\ref{SanLeandro}, Aguilar receives a majority of first-place votes, 11902 to Thomas's 11358.  However, the single-candidate ballots, which provide the favored candidate 2 points to the unlisted candidate's 0, favor Thomas by a 602 vote difference.  This 1204 point advantage is enough to make Thomas the winner with 30286 points, barely defeating the 30226 for Aguilar.  This outcome is simultaneously a majority winner, majority loser, Condorcet winner, and Condorcet loser failure.

\begin{table}
{\scriptsize
\begin{center}
\begin{tabular}{ | c | c | c | c | c| }
\hline
 $\#$Votes & 4332 & 7570 & 4936 & 6422 \\ \hline
 1st & Aguilar & Aguilar & Thomas & Thomas \\
 2nd & -- & Thomas & -- & Aguilar \\ \hline
\end{tabular}
\caption{The San Leandro, CA County Council District 3 Election in 2018.}
\label{SanLeandro}
\end{center}}
\end{table}

\section{Empirical Results}\label{sect:results}

\begin{table}\label{FailureRate}
{
\begin{center}
\begin{tabular}{ | c | c | c | c | c | c |}
\hline
 Voting Failure & EBC & QBC & ABC & BCU & MBC \\ \hline
 Majority Winner    & 0\% & 0\% & 0.4\% & 1.3\% & 1.3\% \\
 Majority Loser     & 0\% & 0\% & N & 1.8\% & N \\
 Condorcet Winner   & 2.6\% & 3.3\% & 3.1\% & 4.8\% & 5.5\% \\
 Condorcet Loser    & 0\% & 0\% & N & 1.8\% & 0\% \\ \hline
 Upward Monotonicity  & N & N & N & N & N \\
 Downward Monotonicity & N & N & N & N & N \\
 No-Show            & N & N &  N & N & N \\ \hline
 Truncation         & 12.1\% & 17.8\% & 22.3\% & 46.6\% & N \\
 Compromise         & 17.6\% & 16.6\% & 13.8\% & 10\% & 20.4\% \\
 Spoiler Effect     & 4.8\% & 5.1\% & 4.3\% & 3.3\% & 9.6\% \\ \hline 
\end{tabular}
\caption{Summary of voting failure rates for the five BC variations across U.S. RCV elections, where N indicates such a failure is not possible}
\end{center}}
\end{table}

Our analysis of 421 U.S. RCV elections from 2004 to 2023 is displayed in Table~\ref{FailureRate}, which shows the percentage of voting failure occurrences for each BC variation.  Some rates use the complete set of 421 elections, but exceptions include the majority winner and loser failure rates, which are taken out of the 228 elections with a majority winner and the 109 with a majority loser.  Likewise, the Condorcet winner rates use the 419 elections that had a Condorcet winner, and BCU's Condorcet loser rate is based on the 420 elections that had such a candidate (only the Oakland school board election did not).  Finally, the spoiler effect rates use the 396 elections with 10 or fewer candidates.

A primary criticism of the Borda count is its susceptibility to majority winner failures, but note how rare these failures actually are. BCU and MBC's 1.3$\%$ correspond to just 3 elections with majority winner failures, with ABC exhibiting this failure only once.  Meanwhile, EBC and QBC never have this failure, although Example~\ref{MajEBC-QBC} showed one is theoretically possible.  These two methods also avoid majority loser failures and exhibit the lowest rates of Condorcet winner failures, clearly showing the strongest results within the verifiable failure category. BCU, on the other hand, is the only variation with majority and Condorcet loser failures, all of which occurred in two-candidate elections as in the previously described 2018 San Leandro election.

When comparing the manipulation failure rates, the BCU percentage for truncation failures illustrates the obvious incentive voters have with this method to submit ballots that only include their top choice, known as bullet voting.  Nearly half of the elections using this variation are vulnerable to truncation failures, with the three variations featuring the averaged approach for partial ballots performing much better.  This occurs since truncating a ballot in those cases is a zero-sum strategy, meaning that your favored candidate will gain in points compared to other highly ranked candidates, but will instead lose ground over the lowest ranked candidates.  BCU would instead cause your top choice to gain an advantage over all candidates who are removed from the ballot.  The lower truncation failure rate for ABC compared to BCU mirrors the computation results by Kamwa (2022), which were referred to in that paper as the averaged and pessimistic models, respectively.

The extended points-scales of EBC and QBC do feature a disadvantage when it comes to compromise failures compared to ABC and BCU.  The increasing point differences between higher pairs of ranks leaves the method exposed to compromise since exchanging first and second ranked candidates would have an outsized effect on the outcome.  Interestingly, MBC has the highest compromise failure rate, which is likely due to it awarding the lowest amount of total points to the candidates.  This results in a smaller margin of victory that is more likely to be overcome by a compromise strategy.  It also happens to feature a much higher spoiler effect rate than the other four variations.  It should be noted that most of the spoiler effects occur due to Condorcet winner failures since the removal of all candidates other than the original winner and the Condorcet winner would consequently change the outcome.

We cannot directly compare to the failure rates in the analysis by Graham-Squire and McCune (2022) of IRV results because of the use of different sets of election data, but there is still value in observing their differences.  For instance, IRV elections had lower rates for Condorcet winner ($1.1\%$), compromise ($3.8\%$), and spoiler effect failures ($1.6\%$) than any BC variation.  However, IRV's main flaw of antidemocratic voting paradoxes - $1.6\%$ for upward monotonicity, $1.1\%$ for downward monotonicity, and $.5\%$ for no-show - are more prevalent than the majority winner failures for EBC, QBC, and ABC.  

Another observation from our results is that the five BC variations agree on a winner in the vast majority of RCV elections.  In particular, 384 of the 421 elections, or about $91\%$, have the same winning candidate for all five methods.  Furthermore, the three variations EBC, QBC, and ABC that all feature the averaged process for handling partial ballots agree in nearly 96$\%$ of the elections.  The cases where the methods disagree are where many of the voting failures arise, but it is notable that the choice of variation is irrelevant for such a high proportion of elections.

\section{Conclusion}

Our analysis of Borda count variations highlights both their strengths and limitations.  Not only are the methods designed to avoid antidemocratic voting paradoxes, but they also perform quite well in terms of verifiable failures.  Notably, majority winner failures are rare, and EBC in particular avoids both majority winner and loser failures while minimizing the Condorcet winner failures.  

The clear trade-off with the five variations compared to IRV lies in the increased vulnerability to voter and candidate manipulation failures.  The averaged approach to partial ballots does at least improve the issue of truncation failures, mitigating the risk of widespread bullet voting.  While truncation and compromise failures are certainly more common with the BC variations than with IRV, this should not preclude the implementation of these methods in additional election jurisdictions.  Coordination among different voters groups required to enact a common truncation or compromise strategy remains challenging, limiting the practical impact of these vulnerabilities.

The primary flaws of IRV, particularly monotonicity and no-show paradoxes, cannot easily be fixed. In contrast, BC methods can be implemented in a way to completely avoid majority failures if that is a major concern for a voting jurisdiction. Black's method, for instance, would always select the Condorcet winner and defaults to a Borda count in elections without one, which would remove all verifiable failures while reducing the manipulation failures.  Ranked pairs and Schulze methods are two other methods that have the same desirable outcome, although their complexity would lack the transparency and accountability of a point-based election method.  These other monotonic voting methods warrant further theoretical and empirical study.

More research is also needed for these BC variations and for RCV methods in general on their outcomes with different types of elections.  For example, do certain methods exhibit more voting failures when the number of candidates is higher, or within nonpartisan elections?  Jungle primaries, which often lead to general elections with multiple candidates from the same party, could also lead to a higher risk of unfair outcomes. A deeper understanding of these dynamics could inform the design and adoption of monotonic voting methods such as variations of the Borda count.







\end{document}